\newcommand{\cN}{{\mathcal{N}}}
\newcommand{\bY}{\bm{Y}}
\newcommand{\bZ}{\bm{Z}}
\newcommand{\bA}{\bm{A}}
\newcommand{\bU}{\bm{U}}
\newcommand{\bB}{\bm{B}}
\newcommand{\bx}{\bm{x}}
\newcommand{\bE}{\mathbb{E}}
\newcommand{\bs}{\mathbf{s}}
\newcommand{\bR}{\mathbb{R}}
\newcommand{\bP}{\mathbf{P}}
\DeclareMathOperator{\Tr}{Tr}
\newtheorem{theorem}{Theorem}
\newtheorem{remark}{Remark}
\newtheorem{lemma}{\textbf{Lemma}}
\pgfplotsset{compat=1.17}
\begin{document}
\title{Mismatched Estimation of Rank-One Symmetric Matrices Under Gaussian Noise} 


\author{\IEEEauthorblockN{Farzad Pourkamali and Nicolas Macris}
\IEEEauthorblockA{LTHC, EPFL, Lausanne, Switzerland}
Emails: \{farzad.pourkamali,nicolas.macris\}@epfl.ch}

\maketitle

\begin{abstract}
We consider the estimation of an $n$-dimensional vectors from noisy element-wise measurements of $\bs \bs^T$, a problem that frequently arises in statistics and machine learning. We investigate a mismatched Bayesian inference setting in which the statistician is unaware of some of the parameters. For the particular case of Gaussian priors and additive noise, we derive the complete exact analytic expression for the asymptotic mean squared error (MSE) in the large system size limit. Our formulas demonstrate that estimation is still possible in the mismatched case and that the statistician can achieve the minimum MSE (MMSE) by selecting appropriate parameters. Our technique is based on the asymptotic behavior of spherical integrals and can be used as long as the statistician chooses a rotationally invariant prior. 
\end{abstract}

\section{Introduction}
Many problems in machine learning and statistics can be expressed as estimating a low-rank matrix from its noisy observation. Examples are sparse PCA \cite{DM14}, the spiked Wigner model, community detection \cite{DAM15}. For the rank-one symmetric case, the problem is formulated as follows: a vector $\bs \in \bR^n$ is generated with i.i.d. elements distributed according to $s_i \sim \bP^*$, the matrix $\bs \bs^T$ is observed through an element-wise additive white gaussian noise channel. The goal is to estimate the vector $\bs$ upon observing the noisy version of $\bs\bs^T$.

The statistical and computational limits of this problem have been extensively studied. Most works have so far considered the "Bayes-optimal" setting, in which the prior $P$ and possibly other hyper-parameters (e.g., SNR) are known to the statistician. In the Bayes-optimal setting, computing the mutual information enables us to compute the minimum mean squared error (MMSE) and derive the information-theoretical limits of the estimation. The analytical but highly non-rigorous replica and cavity methods rooted in statistical physics have been used to derive expressions for the mutual information between the true signal and the observation matrix \cite{LKL15}. These expressions were already rigorously derived in early work \cite{KoradaMacris2009} for binary signals using Guerra-Toninelli interpolation \cite{GuerraToninelli2002}. Later the problem has been studied in much detail for general signals, in \cite{BDM16} used approximate message passing (AMP) and spatial coupling, \cite{LM17} Guerra-Toninelli interpolation and Aizenman-Sims-Starr methods, and \cite{BM19}, \cite{BarbierMacris2019} the adaptive interpolation method, to rigorously prove the limiting expressions of mutual information and MMSE. All these methods crucially rely on the assumption that the prior and the parameters of the estimation problem are known to the statistician. The Bayes law then induces remarkable identities that enable the analysis to proceed. In the present case, we lack such identities.

Despite the vast amount of work on this problem in the Bayes-optimal setting, to the best of our knowledge, there is no rigorous result for the {\it mismatched} case corresponding to the realistic situation where the statistician does not know the true prior or/and hyper-parameters, and can only make assumptions about them. Mismatched inference for the scalar and vector estimation problems has been considered in \cite{V10}, \cite{T10}. In particular, \cite{V10} proved a result relating the MSE in the mismatched inference to the relative entropy of the true prior and the statistician's prior. We follow this work and define the MSE similarly (up to natural modification for the matrix case). 

The main contribution of this paper is to compute the full asymptotic of the mismatched MSE for the matrix factorization problem in the large $n$ limit. Our approach uses the results on the spherical integrals from the mathematical physics literature \cite{GM05}. A primary assumption in our method that would be difficult to dispense of is the rotational invariance of the statistician's prior. Despite this restriction, we can study non-rotation invariant true priors, non-symmetric matrix estimation, higher-ranks (finite w.r.t $n\to +\infty$). In this short note, we limit ourselves to the theoretical limits of mismatched estimation for the case of Gaussian priors (both for the true and the fake one) and postpone the detailed study of the more general cases to a forthcoming detailed work. As will become clear in section \ref{secMainResults}, already under this limited setting, the phase transitions phenomenology is quite rich.

The rest of the paper is organized as follows. In Section II, we introduce the setting and formulate the problem. Section III describes the main result and discusses it in several special cases, followed by the proof sketch of the main theorem in Section IV. Lastly, we conclude the paper with some remarks and possible future directions for this line of work.

\section{Problem Setting}
Suppose $\bs \in \bR^n$ is generated with i.i.d. elements from $P^*=\cN(0, \sigma^2)$, the observed matrix is
\begin{equation}
 \bY = \sqrt{\frac{\lambda}{n}} \bs \bs^T + \bZ
 \label{Spiked-Wigner}
\end{equation}
where $\lambda$ is the signal-to-noise-ratio (SNR), and the noise matrix $\bZ$ is a symmetric matrix with i.i.d. $\cN(0, 1)$ off-diagonal and $\cN(0, 2)$ diagonal entries. This model is called the \textit{Spiked-Wigner model}. The purpose of the scaling factor $\frac{1}{\sqrt{n}}$ is to make the inference problem neither trivially easy nor completely impossible in the large system limit.

The statistician is aware that the channel is additive Gaussian and that the true prior is a centered Gaussian, but he does not know the values $\lambda$ and $\sigma$.
He assumes values $\lambda'$ and $\sigma'$ as the SNR and the prior variance. Following the Bayesian estimation principle, he chooses the posterior mean as the estimate for the ground-truth. Our goal is to compute the asymptotic of the MSE for this mismatched estimation problem. Define the \textit{mismatched} matrix-MSE as
\begin{equation}
 \text{MSE}_n(\sigma, \sigma', \lambda, \lambda') := \frac{1}{n^2} \bE_{\bP^*, \bP_{\bZ}}\Big[\big\|\bs \bs^T - \langle \bx \bx^T \rangle_{\lambda',\sigma'}\big\|_F^2 \Big]
\end{equation}
where $\|.\|_F$ is the Frobenius norm, and $\langle . \rangle_{\lambda',\sigma'}$ denotes the expectation with respect to the posterior distribution from the statistician's point of view, that the SNR is $\lambda'$ and $\bx \sim \bP=\cN(0, \sigma'^2)$. Here we adopt the traditional statistical mechanics notation for the internal (annealed) expectations 
$$
\langle f(\bx) \rangle_{\lambda',\sigma'} = \frac{\int d\bx\, \bP(\bx) f(\bx) e^{-\frac{1}{4} \Vert \sqrt{\frac{\lambda}{n}} \bs\bs^T + \bZ - \sqrt{\frac{\lambda'}{n}} \bx\bx^T \Vert_F^2}}{\int d\bx\, \bP(\bx) e^{-\frac{1}{4} \Vert \sqrt{\frac{\lambda}{n}} \bs\bs^T + \bZ - \sqrt{\frac{\lambda'}{n}} \bx\bx^T \Vert_F^2}}
$$
for any reasonable function $f(\bx)$ such that the integrals are finite.

Note that, when we are in the matched (Bayes optimal) case $\lambda'=\lambda$, $\sigma'=\sigma$, the best achievable error is the matrix-MMSE which is defined as
\begin{equation}
 \text{MMSE}_n(\sigma, \lambda) := \frac{1}{n^2} \bE_{\bP^*, \bP_{\bZ}}\Big[\big\|\bs \bs^T - \langle \bx \bx^T \rangle_{\lambda,\sigma}\big\|_F^2 \Big]
\end{equation}
We necessarily have $\text{MSE}_n \geq \text{MMSE}_n$.

\section{Main Result}\label{secMainResults}
The main result is the following:
\begin{theorem}
Assume that the sequence $({\rm MSE})_{n\geq 1}$ converges uniformly in $(\lambda, \lambda') \in K \subset \mathbb{R}_+^2 $, then for 
For all $\sigma, \sigma'$ (strictly positive) and $(\lambda, \lambda')\in K$, the asymptotic mismatched MSE is given by eq. (\ref{MSE_general}).
\begin{figure*}[b!]
\centering
\begin{minipage}{\textwidth}
\vspace{-0.1cm}
\hrule
\begin{equation}
 \lim_{n \rightarrow \infty} \text{MSE}_n(\sigma, \sigma', \lambda, \lambda') = \left\{
\begin{array}{ll}
  \sigma^4 + \big(\frac{1}{\sqrt{\lambda'}}-\frac{1}{\lambda' \sigma'^2}\big)^2 & \text{if } \lambda \leq \frac{1}{\sigma^4}\text{, and } \lambda' \geq \frac{1}{\sigma'^4} \\
  \sigma^4(1-\sqrt{\frac{\lambda}{\lambda'}})^2 + \frac{2}{\sqrt{\lambda \lambda'}} + \frac{1}{\lambda'^2 \sigma'^4} + \frac{2}{\lambda'} \frac{\sigma^2}{\sigma'^2} (1 - \sqrt{\frac{\lambda}{\lambda'}}) - \frac{2}{\lambda \lambda' \sigma^2 \sigma'^2}& \text{if } \lambda \geq \frac{1}{\sigma^4}\text{, and } \sqrt{\lambda \lambda'} \geq \frac{1}{\sigma^2 \sigma'^2} \\
  \sigma^4 & \text{if o.w.}
\end{array}
\right.
\label{MSE_general}
\end{equation}
\medskip
\end{minipage}
\end{figure*}
\end{theorem}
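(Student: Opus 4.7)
My plan is to exploit the rotational invariance of the Gaussian prior $\bP$, reducing the problem to spectral quantities of $\bY$, and then to analyze the posterior partition function via Guionnet--Maida asymptotics for rank-one spherical integrals. Expanding the Frobenius norm gives
\[
 \text{MSE}_n = \tfrac{1}{n^2}\bE[(\bs^T\bs)^2] - \tfrac{2}{n^2}\bE\langle (\bs^T\bx)^2\rangle + \tfrac{1}{n^2}\bE[\Tr\langle\bx\bx^T\rangle^2].
\]
The first term converges to $\sigma^4$ by the law of large numbers; the other two are the objects of interest.

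Because $\bP=\cN(0,\sigma'^2\bm{I}_n)$ is orthogonally invariant, after the eigendecomposition $\bY=\bU\bm{D}\bU^T$ and the change of variables $\bx\mapsto\bU^T\bx$ the posterior retains its form but with $\bY$ replaced by $\bm{D}$. Consequently $\langle\bx\bx^T\rangle=\bU\,\mathrm{diag}(m_1,\dots,m_n)\,\bU^T$, where $m_i=m_i(\bm{D})$ depends only on the spectrum. This gives $\Tr\langle\bx\bx^T\rangle^2=\sum_i m_i^2$ and $\bs^T\langle\bx\bx^T\rangle\bs=\sum_i(\bU^T\bs)_i^2 m_i$, cleanly separating a spectral quantity (the $m_i$) from a geometric one (the projection of $\bs$ onto the eigenvectors of $\bY$).

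To obtain the $m_i$ I parameterize $\bx=\sqrt{nq}\,\hat\bx$ with $\hat\bx\in S^{n-1}$, so that the partition function factorizes into a radial integral over $q$ and an angular integral
\[
\int_{S^{n-1}}\exp\!\Big(\tfrac{nq\sqrt{\lambda'}}{2}\hat\bx^T\tilde{\bm{D}}\hat\bx\Big)\,d\sigma(\hat\bx),\qquad \tilde{\bm{D}}=\bm{D}/\sqrt{n}.
\]
The large-$n$ asymptotic of the angular integral is given by the Guionnet--Maida formula \cite{GM05} in terms of the limit spectral measure of $\tilde{\bm{D}}$, namely the semicircle on $[-2,2]$ together with a BBP outlier at $\sqrt{\lambda}\sigma^2+1/(\sqrt{\lambda}\sigma^2)$ when $\lambda\sigma^4>1$. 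Laplace's method on the remaining one-dimensional integral over $q$ then gives an explicit expression for $\lim_n\tfrac{1}{n}\bE\log Z_n$. Derivatives of this free energy, justified by the uniform-convergence hypothesis, together with concentration of the angular measure at its saddle, yield the limits of $\sum_i m_i/n$, $\sum_i m_i^2/n^2$, and of the individual top entry $m_1$.

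For the overlap $\sum_i(\bU^T\bs)_i^2 m_i/n^2$ I split into the bulk contribution and that of the top eigenvector of $\bY$. Bulk eigenvectors are delocalized with respect to $\bs$, so $(\bU^T\bs)_i^2\to\sigma^2$ on average, while when $\lambda\sigma^4>1$ the classical BBP result gives $(\bU^T\bs)_1^2/\|\bs\|^2\to1-1/(\lambda\sigma^4)$. Combined with the spectral estimates for the $m_i$, this produces the three cases of \eqref{MSE_general} according to whether (i) $\bY$ itself carries an outlier (threshold $\lambda\sigma^4=1$), and (ii) the statistician's posterior polarizes along that outlier (thresholds $\lambda'\sigma'^4=1$ when the observation is sub-critical, or $\sqrt{\lambda\lambda'}\sigma^2\sigma'^2=1$ when it is super-critical). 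The principal obstacle is the careful bookkeeping of these two BBP-type transitions: one must verify that the saddle-point equation bifurcates at exactly the stated thresholds between an isotropic and a polarized branch, validate Guionnet--Maida in the presence of an outlier, and check that the three regions tile the parameter space and agree at their shared boundaries.
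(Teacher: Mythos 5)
Your toolkit coincides with the paper's at every major step: rotation invariance of the Gaussian prior, the Guionnet--Maida rank-one spherical integral against the (possibly BBP-deformed) spectrum of $\bY/\sqrt{n}$, Laplace's method on the radial integrals, and the uniform-convergence hypothesis to exchange limits and derivatives; the thresholds you identify ($\lambda\sigma^4=1$, $\lambda'\sigma'^4=1$, $\sqrt{\lambda\lambda'}\sigma^2\sigma'^2=1$) are exactly the paper's case boundaries. The organization, however, differs in a way that leaves a genuine gap. The paper never needs the individual posterior spectral weights $m_i$ or the eigenvector overlaps: its Lemma 1 is an \emph{exact finite-$n$ identity},
\begin{equation*}
\frac{d}{d\lambda'}f_n+\Big(2-\sqrt{\tfrac{\lambda}{\lambda'}}\Big)\sqrt{\tfrac{\lambda}{\lambda'}}\,\frac{d}{d\lambda}f_n+\tfrac14\sigma^4=\tfrac14{\rm MSE}_n,
\end{equation*}
obtained by Gaussian integration by parts on $\bZ$. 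The $\lambda$-derivative of $f_n$ produces $\langle(\bs^T\bx)^2\rangle$ alone, the $\lambda'$-derivative produces $\|\langle\bx\bx^T\rangle\|_F^2-\sqrt{\lambda/\lambda'}\langle(\bs^T\bx)^2\rangle$, and the stated combination reassembles precisely the three terms of your Frobenius expansion. After that, only the scalar limit $\lim_n f_n$ is required, and that is exactly what Guionnet--Maida plus Laplace delivers.

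Your route instead requires the limits of $m_1/n$ (the condensate fraction of the posterior along the top eigenvector of $\bY$) and of $(\bU^T\bs)_1^2\,m_1/n^2$. These are \emph{local} observables of the Gibbs measure; they do not follow from the free energy limit alone, and ``concentration of the angular measure at its saddle'' is precisely the unproven step on which your whole computation rests. (The bulk terms you worry about are actually harmless: $\sum_{i\,{\rm bulk}}(\bU^T\bs)_i^2 m_i=O(n)$, so it vanishes after division by $n^2$ regardless of any correlation between bulk overlaps and bulk weights --- everything rides on the condensate.) To close the gap you would either have to prove the required localization directly (e.g.\ via a tilted free energy with a field conjugate to the component of $\bx$ along the top eigenvector, plus a Legendre/derivative argument), or, more economically, adopt the paper's device: differentiate $f_n$ in $\lambda$ and $\lambda'$ and integrate by parts over the Gaussian noise, which converts exactly the posterior averages you need into derivatives of the quantity you have already computed.
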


\begin{remark}\label{remark-on-unif-conv}
In the matched case, uniform convergence of the sequence $({\rm MMSE})_{n\geq 1}$ - except possibly at phase transition points which form a set of measure zero - follows using the concavity of mutual information with respect to $\lambda$. Then, using the I-MMSE relation \cite{GuoVerduMMSE2005}, this allows to interchange limit and derivative to go from asymptotic mutual information (a.k.a. free energy) to asymptotic MMSE. For the present mismatched MSE, we use a relation similar to I-MMSE but in terms of mismatched free energies, which lack concavity w.r.t. $\lambda$ and $\lambda'$. Therefore almost everywhere, uniform convergence is difficult to establish from general principles. However, we conjecture that it holds and that eq. (\ref{MSE_general}) holds almost everywhere (i.e., except possibly at phase transition lines).
\end{remark}

The MSE is illustrated for the case of $\sigma=1, \lambda=2$ in Fig. 1. The observed behavior is generic for $\lambda\sigma^4 >1$. We observe one phase transition line and an intermediate region where estimation better than chance is possible, in the sense that the MSE is smaller than $\sigma^4$. We refer to the caption of Fig. 1 for details. In the case $\sigma =1$ and $\lambda <1$, or more generally $\lambda\sigma^4 < 1$, it is easy to see from Eq. \eqref{MSE_general} that the intermediate region disappears and the MSE is always greater or equal to $\sigma^4$ (the phase transition line is still present technically speaking).

\begin{figure}[H]
  \centering
  {
\includegraphics[scale = 0.7]{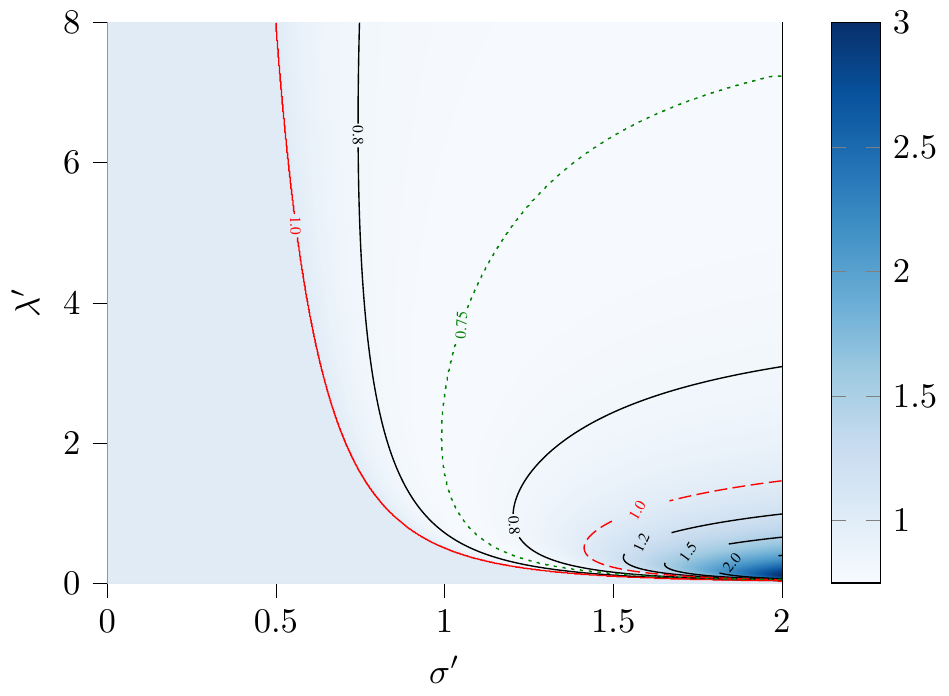}
}
\caption{\small Plot of MSE according to Eq. \eqref{MSE_general} for $\sigma = 1, \lambda = 2$. The solid leftmost (red) curve is a {\it phase transition} line. On the left of this curve ${\rm MSE = \sigma^4 =1}$. In the {\it intermediate region} between the solid leftmost (red) curve and the dashed (red) curve the MSE takes values less than $\sigma^4 =1$. In this intermediate region estimation better than chance is possible. On the dotted (green) curve the MSE attains the  ${\rm MMSE}(\sigma, \lambda) = \frac{2}{\lambda} - \frac{1}{\lambda^2 \sigma^4}=0.75$ (even though we do not have $\lambda'=\lambda$, $\sigma'=\sigma$ except for one point with a vertical tangent on the curve). The MSE equals 
$\sigma^4=1$ on the dashed (red) line and takes higher values in the region on the right hand side of this line. Note that this is {\it not} a phase transition line. Finally we point out that the MSE is continuous throughout and the phase transition is therefore a {\it continuous phase transition}.
The analytical expressions of the phase transition line, as well as dotted and dashed lines can easily be written down from eqs. \eqref{MSE_general} and \eqref{eqmmse}. For $\sigma = 1, \lambda = 2$ the dotted (resp. dashed) curves have horizontal asymptotes $\lambda'=8$ (resp. $\lambda'=2$).}
\end{figure}

Figures 2 and 3 depict the behavior of the MSE along vertical and horizontal sections of Fig. 1. We
clearly observe that the MSE is not monotonous and that for $\lambda' < 8$, the minimal value given by the MMSE may be achieved. 
These observations can be checked analytically from the expressions of the MSE and MMSE.

 \begin{figure}[H]
  \centering
{
\includegraphics[scale = 0.8]{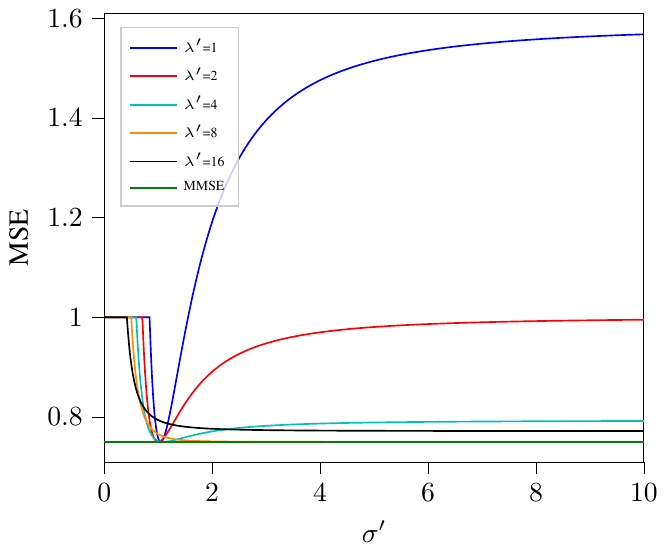}
}
\caption{\small Behavior of the MSE as a function of $\sigma'$. Here $\sigma = 1, \lambda = 2$. The horizontal (green) level gives the value of the ${\rm MMSE} = \frac{2}{\lambda} - \frac{1}{\lambda^2\sigma^4} = 0.75$ in the matched case. We have
$\lim_{\sigma'\to +\infty}{\rm MSE} = 1+\frac{2}{\lambda'} - \sqrt{\frac{2}{\lambda'}}$ and this limiting value is decreasing (resp. increasing) for $\lambda' < 8$ (resp. $\lambda' > 8$). For $\lambda' > 2$ estimation better then chance is possible for large enough $\sigma'$.}
\end{figure}
\begin{figure}[H]
\centering
{
\includegraphics[scale = 0.8]{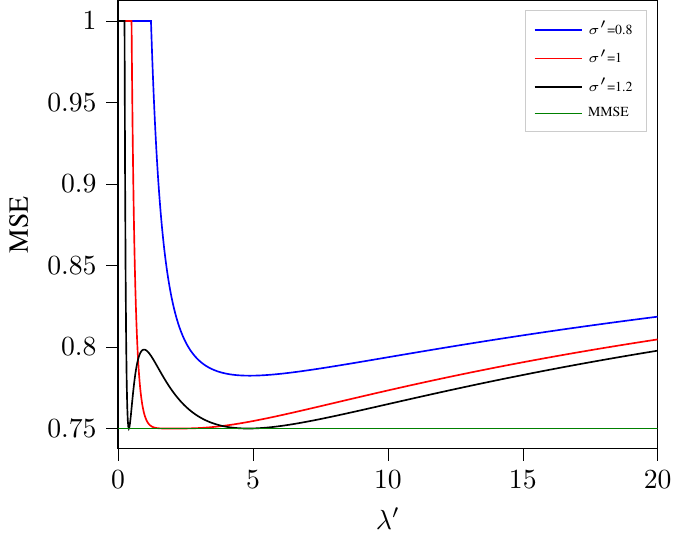}
}
\caption{\small Behavior of the MSE as a function of $\lambda'$. Here $\sigma = 1, \lambda = 2$.
The horizontal (green) level gives the value of the ${\rm MMSE} = \frac{2}{\lambda} - \frac{1}{\lambda^2\sigma^4} = 0.75$ in the matched case.
All curves have horizontal asymptote $\sigma^4 =1$ for $\lambda' \to+\infty$.}
\end{figure}

\subsection{Inference with Matched SNR}
Suppose that the statistician fully knows the channel and can choose $\lambda' = \lambda$. The mismatched MSE then reduces to:
\begin{align}
\begin{aligned}
&
\text{if } \sigma' \leq \sigma, \, \lim_{n \rightarrow \infty} \text{MSE}_n(\sigma, \sigma', \lambda, \lambda) = 
\\ 
&
\left\{
\begin{array}{ll}
  \sigma^4 & \text{if } \lambda \leq \frac{1}{\sigma^2 \sigma'^2}\\
  \frac{2}{\lambda} - \frac{1}{\lambda^2 \sigma'^2}(\frac{2}{\sigma^2} - \frac{1}{\sigma'^2}) & \text{if } \lambda \geq \frac{1}{\sigma^2 \sigma'^2}
\end{array}
\right.
\nonumber
\\
\\
&
 \text{if } \sigma' \geq \sigma, \, \lim_{n \rightarrow \infty} \text{MSE}_n(\sigma, \sigma', \lambda, \lambda) = 
 \\ 
 &
 \left\{
\begin{array}{ll}
  \sigma^4 & \text{if } \lambda \leq \frac{1}{ \sigma'^4}\\
  \sigma^4 + \frac{1}{\lambda} - \frac{1}{\lambda^{\frac{3}{2}} \sigma'^2}(2 - \frac{1}{\sqrt{\lambda} \sigma'^2}) & \text{if } \frac{1}{\sigma'^4} \leq \lambda \leq \frac{1}{\sigma^4} \\
  \frac{2}{\lambda} - \frac{1}{\lambda^2 \sigma'^2}(\frac{2}{\sigma^2} - \frac{1}{\sigma'^2}) & \text{if } \lambda \geq \frac{1}{\sigma^4}
\end{array}
\right.
\nonumber
\end{aligned}
\end{align}

For $\sigma = 1$ the MSE is plotted as a function of SNR for various values of $\sigma'$ in Fig. 4. When $\sigma' > \sigma$, we observe that the MSE increases as the SNR increases (a similar behavior occurs on Fig. 1 in \cite{V10} for the scalar case). Although this happens when we are still in the regime of small SNR and estimation is impossible, we find this behavior rather counterintuitive.

\begin{figure}[H]
\centering
\includegraphics[scale = 0.8]{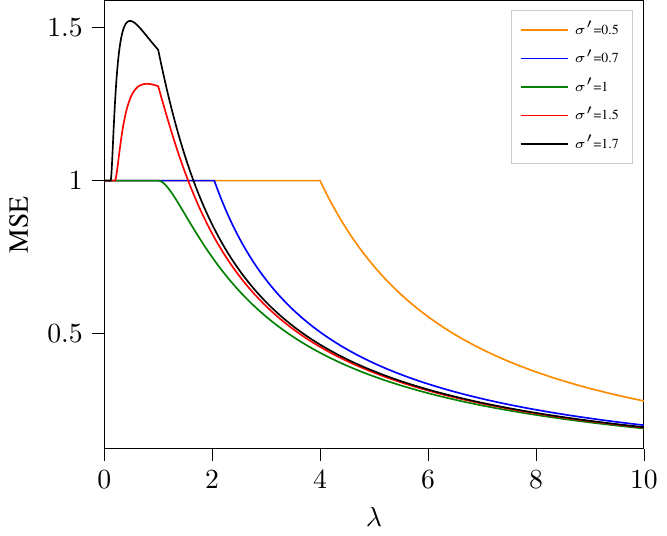}
\caption{\small Behavior of the MSE for matched SNR $\lambda' = \lambda$.}
\end{figure}

\begin{remark}
For $\sigma' = \sigma$, in fact we are in the Bayes optimal setting and we can find the minimum MSE (MMSE). In this case, previous equations reduce to
\begin{equation}\label{eqmmse}
 \lim_{n \rightarrow \infty} {\rm MMSE}_n(\sigma, \lambda) = \left\{
\begin{array}{ll}
  \sigma^4 & {\rm if } \lambda \leq \frac{1}{\sigma^4}\\
  \frac{2}{\lambda} - \frac{1}{\lambda^2 \sigma^4} & {\rm if } \lambda \geq \frac{1}{\sigma^4}
\end{array}
\right.
\end{equation}
This expression is well known and derived previously by a host of different approaches (see \cite{DM14, DAM15, LM17, BM19, BDM16}).
\end{remark}

As a sanity check of our result for the matched SNR case, with a bit of work we can check explicitly that
\begin{equation}
\begin{split}
\int_0^{\infty} [\text{MSE}(\sigma, \sigma', \lambda, \lambda) - &\text{MMSE}(\sigma, \lambda)] \, d \lambda \\
&= 4 D_{KL}(\cN(0,\sigma^2), \cN(0,\sigma'^2))
\end{split}
\end{equation}
where $D_{KL}$ denotes the Kullback-Leibler divergence. This sum-rule for vector channels is derived in \cite{V10} (with a factor of $2$ instead of $4$ in the vector case).

\section{Analysis}
\subsection{Mismatched free Energy and MSE}
From the statistician's point of view, the posterior distribution reads up to a normalizing factor
\begin{equation}
 \begin{split}
   \bP\{ \bx | \bY\} &\propto e^{-\frac{1}{4} \|\bY - \sqrt{\frac{\lambda'}{n}}\bx \bx^T \|_F^2}\bP(\bx) \\
 & \propto e^{-\frac{\lambda'}{4n}\|\bx\|^4 + \frac{1}{2}\sqrt{\frac{\lambda'}{n}} \Tr \bY \bx \bx^T}\bP(\bx)
 \end{split}
 \label{Post_dist}
\end{equation}
where $\bP$ is the normal distribution with iid entries and variance $\sigma'$. In deriving the second line, we use the fact that $\|\bY\|_F$ is a constant (because it is being conditioned on). Note that,  $\bY$ is symmetric and the upper (or lower) part is distributed as $(Y_{i,j})_{i<j} \sim \cN(\sqrt{\frac{\lambda}{n}}s_is_j,1)$, and the diagonal $(Y_{i,i}) \sim \cN(\sqrt{\frac{\lambda}{n}}s_i s_i,2)$.

The \textit{partition function} is defined as the normalization factor of the last expression
\begin{equation}
 Z(\bY) = \int d\bx \, e^{-\frac{\lambda'}{4n}\|\bx\|^4 + \frac{1}{2}\sqrt{\frac{\lambda'}{n}} \Tr \bY \bx \bx^T}\bP(\bx)
 \label{partition_func}
\end{equation}
and the {\it mismatched free energy} is defined as 
\begin{equation}
 f_n(\sigma, \sigma', \lambda, \lambda') = -\frac{1}{n} \bE_{\bP^*, \bP_{\bZ}} [\ln Z(\bY)]
 \label{free_energy}
\end{equation}
Now we state a lemma relating the mismatched free energy to MSE. Keep in mind that both mismatched free energy and MSE are functions of $\sigma, \sigma',\lambda, \lambda'$, but for simplicity of notation, we drop the arguments. 

\begin{lemma}
\begin{equation}
  \frac{d}{d \lambda'} f_n+(2-\sqrt{\frac{\lambda}{\lambda'}})\sqrt{\frac{\lambda}{\lambda'}}\frac{d}{d \lambda} f_n + \frac{1}{4} \sigma^4 = \frac{1}{4} {\rm MSE}_n
\label{f-mse}
\end{equation}
\end{lemma}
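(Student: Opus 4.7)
The plan is to differentiate the mismatched free energy $f_n$ separately with respect to $\lambda$ and $\lambda'$, then use Gaussian integration by parts (Stein's formula) on the noise $\bZ$ to re-express both derivatives purely in terms of the two ``overlap'' expectations $\frac{1}{n^2}\bE\langle(\bs^T\bx)^2\rangle$ and $\frac{1}{n^2}\bE\langle(\bx^T\bx')^2\rangle$, where $\bx,\bx'$ denote independent replicas from the statistician's posterior. These are precisely the cross and quadratic terms appearing in $\text{MSE}_n$ once $\|\bs\bs^T-\langle\bx\bx^T\rangle\|_F^2$ is expanded, so (\ref{f-mse}) will then reduce to a purely algebraic matching of coefficients, with $\sigma^4$ accounting for $\frac{1}{n^2}\bE[\|\bs\|^4]$.

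The parameter $\lambda$ enters only through $\bY=\sqrt{\lambda/n}\,\bs\bs^T+\bZ$, so interchanging differentiation and the $\bx$-integral immediately gives
$$\frac{df_n}{d\lambda} = -\frac{1}{4n^2}\sqrt{\lambda'/\lambda}\;\bE\langle(\bs^T\bx)^2\rangle.$$
Differentiating in $\lambda'$ is messier and yields
$$\frac{df_n}{d\lambda'} = \frac{1}{4n^2}\bE\langle\|\bx\|^4\rangle \;-\; \frac{1}{4n^{3/2}\sqrt{\lambda'}}\bE\big[\Tr\bY\langle\bx\bx^T\rangle\big].$$
I would split $\bY$ as above, recognize $\Tr(\bs\bs^T\langle\bx\bx^T\rangle)=\langle(\bs^T\bx)^2\rangle$, and then attack the remaining piece $\bE[\Tr\bZ\langle\bx\bx^T\rangle]$ by Stein's formula entry by entry. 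This is the crux of the proof: the off-diagonal entries $Z_{ij}$ ($i<j$) have variance $1$, the diagonal $Z_{ii}$ have variance $2$, and the symmetry $Z_{ji}=Z_{ij}$ makes $\partial_{Z_{ij}}(\Tr\bY\bx\bx^T)=2x_ix_j$ off-diagonal versus $x_i^2$ on the diagonal. Once the factors of $2$ are tracked carefully, the contributions recombine into the clean double sum $\sum_{i,j}\bE[\langle x_i^2x_j^2\rangle-\langle x_ix_j\rangle^2]$, and using the replica identity $\|\langle\bx\bx^T\rangle\|_F^2=\langle(\bx^T\bx')^2\rangle$ I obtain
$$\bE\big[\Tr\bZ\langle\bx\bx^T\rangle\big] \;=\; \sqrt{\lambda'/n}\,\big(\bE\langle\|\bx\|^4\rangle - \bE\langle(\bx^T\bx')^2\rangle\big).$$

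Substituting back, the two $\bE\langle\|\bx\|^4\rangle$ contributions to $\frac{df_n}{d\lambda'}$ cancel, leaving
$$\frac{df_n}{d\lambda'} = -\frac{1}{4n^2}\sqrt{\lambda/\lambda'}\,\bE\langle(\bs^T\bx)^2\rangle + \frac{1}{4n^2}\bE\langle(\bx^T\bx')^2\rangle.$$
Multiplying $\frac{df_n}{d\lambda}$ by the prescribed weight $(2-\sqrt{\lambda/\lambda'})\sqrt{\lambda/\lambda'}$ and using $\sqrt{\lambda/\lambda'}\cdot\sqrt{\lambda'/\lambda}=1$, the $\sqrt{\lambda/\lambda'}\,\bE\langle(\bs^T\bx)^2\rangle$ pieces cancel and one gets
$$\frac{df_n}{d\lambda'}+(2-\sqrt{\lambda/\lambda'})\sqrt{\lambda/\lambda'}\frac{df_n}{d\lambda} = \tfrac{1}{4}\Big[\tfrac{1}{n^2}\bE\langle(\bx^T\bx')^2\rangle-\tfrac{2}{n^2}\bE\langle(\bs^T\bx)^2\rangle\Big].$$
Expanding the Frobenius norm gives $\text{MSE}_n = \frac{1}{n^2}\bE[\|\bs\|^4] - \frac{2}{n^2}\bE\langle(\bs^T\bx)^2\rangle + \frac{1}{n^2}\bE\langle(\bx^T\bx')^2\rangle$, and for the Gaussian true prior $\frac{1}{n^2}\bE[\|\bs\|^4]=\sigma^4(1+2/n)$, matching the $\sigma^4$ in (\ref{f-mse}) up to an $O(1/n)$ remainder that is harmless for the asymptotic use of the lemma. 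The only genuinely delicate step is the symmetric-matrix Stein calculation above; everything else is direct differentiation and bookkeeping.
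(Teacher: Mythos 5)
Your proof is correct and follows essentially the same route as the paper: differentiate $f_n$ in $\lambda$ and $\lambda'$, use Gaussian integration by parts on the symmetric noise to turn $\bE[\Tr \bZ \langle \bx\bx^T\rangle]$ into $\sqrt{\lambda'/n}\,\big(\bE\langle\|\bx\|^4\rangle - \bE\big[\|\langle\bx\bx^T\rangle\|_F^2\big]\big)$, and then match terms with the expanded Frobenius norm. You are in fact slightly more careful than the paper in observing that $\frac{1}{n^2}\bE\|\bs\|^4 = \sigma^4(1+2/n)$, so the stated identity holds up to an $O(1/n)$ remainder that is immaterial for the asymptotic use of the lemma.
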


\begin{remark}
Eq. \eqref{f-mse} generalizes the classical I-MMSE relation. Here the mismatched free energy cannot be related to a mutual information. However, note that, in the special case where $\lambda' = \lambda$ Eq. \eqref{f-mse} simplifies slightly and combining with the I-MMSE relation, we obtain that the difference of MSE and MMSE is directly related to a derivative of a relative entropy, equivalent to relations discussed in detail in \cite{V10} for vector channels. 
\end{remark}

\textit{Proof of lemma.}
We have
\begin{equation}
 \frac{d}{d \lambda} f_n = - \frac{1}{4} \frac{1}{n^2} \sqrt{\frac{\lambda'}{\lambda}} \bE_{\bP^*, \bP_{\bZ}} \Big[ \big\langle (\bs^T \bx)^2 \big\rangle_{\lambda',\sigma'} \Big]
\end{equation}
and by using a standard Gaussian integration by parts trick,
\begin{equation}
 \frac{d}{d \lambda'} f_n = \frac{1}{4} \frac{1}{n^2} \bE_{\bP^*, \bP_{\bZ}} \Big[ \big\| \langle \bx \bx^T \rangle_{\lambda',\sigma'} \big\|_F^2 - \sqrt{\frac{\lambda}{\lambda'}} \big\langle (\bs^T \bx)^2 \big\rangle_{\lambda',\sigma'} \Big]
\end{equation}
Putting these two equations together, the left-hand side of eq. (\ref{f-mse}) is equal to
\begin{align}
  &\frac{1}{4} \frac{1}{n^2} \bE_{\bP^*, \bP_{\bZ}} \Big[ \big\| \langle \bx \bx^T \rangle_{\lambda',\sigma'} \big\|_F^2 - 2 \big\langle (\bs^T \bx)^2 \big\rangle_{\lambda',\sigma'} + \|\bs\|^4 \Big] 
  \nonumber \\
  &= \frac{1}{4} \frac{1}{n^2} \bE_{\bP^*, \bP_{\bZ}} \Big[ \big\| \langle \bx \bx^T \rangle_{\lambda',\sigma'} \big\|_F^2 - 2 \Tr \bs \bs^T \langle \bx \bx^T \rangle_{\lambda',\sigma'}
  \nonumber \\ &
  \hspace{4mm}+ \|\bs \bs^T\|_F^2 \Big]
= \frac{1}{4} \text{MSE}_n \hspace{0.7cm} \square
\end{align}

Thus, the problem is reduced to computing the (mismatched) free energy.
The main idea is to exploit the rotational invariance of the normal distribution. Changing variables $\bx \rightarrow \bU \bx$, for an orthogonal matrix $\bU \in \bR^{n \times n}$, the integral in eq. (\ref{partition_func}) becomes ($|{\rm det} \bU| =1$):
\begin{equation}
 \begin{split}
   Z(\bY) &= \int d\bx \, e^{-\frac{\lambda'}{4n}\|\bU \bx\|^4 + \frac{1}{2}\sqrt{\frac{\lambda'}{n}} \Tr \bY \bU \bx \bx^T \bU^T}\bP( \bU \bx) \\
 &= \int d\bx \bP(\bx)\, e^{-\frac{\lambda'}{4n}\|\bx\|^4 + \frac{1}{2}\sqrt{\frac{\lambda'}{n}} \Tr \bY \bU \bx \bx^T \bU^T}
 \end{split}
\end{equation}
Since this holds for any orthogonal matrix $\bU$, we can take the expectation over the \textit{Haar} measure on the group of $n \times n$ orthogonal matrices.
\begin{equation}
 Z(\bY) = \int d\bx \bP(\bx)\, e^{\frac{-\lambda'}{4n}\|\bx\|^4} \int D \bU e^{\frac{1}{2}\sqrt{\frac{\lambda'}{n}} \Tr \bY \bU \bx \bx^T \bU^T}
 \label{part_spher}
\end{equation}
where $D \bU$ denotes the \textit{Haar} measure.

In the next subsection, we will discuss how to compute the inner integral in eq. (\ref{part_spher}).

\subsection{Spherical Integrals}
The spherical integral is defined as:
\begin{equation}
 I_n(\bA,\bB) = \int D \bU e^{n \Tr \bA \bU \bB \bU^T }
 \label{Spherical_int}
\end{equation}
where $\bA, \bB \in \bR^{n \times n}$, and $D \bU$ denotes the \textit{Haar} measure over the orthogonal matrices. Note that, this definition can also be extended to the unitary matrices. In the mathematical physics literature, such integrals are often called \textit{Harish-Chandra-Itzykson-Zuber (HCIZ)} integrals. The interest for these objects dates to the work of the mathematician Harish-Chandra \cite{HC}, and they have been extensively studied and developed in physics and mathematics. 
In particular, \cite{GM05} estimated the asymptotics of spherical integrals when the rank of matrix $\bB$ is $O(1)$ w.r.t $n$. We will apply this result to our case.

From the definition (\ref{Spherical_int}), one may notice that the integral only depends on the eigenvalues of $\bA$,$\bB$. So, it is natural to expect that the asymptotic of the integral depends on the limiting spectral measure of the matrix $\bA$. The result of \cite{GM05} is based on the hypothesis that the spectral measure $\mu_{\bA}$ converges weakly towards a compactly supported measure $\mu$, and the minimum and maximum eigenvalues of $\bA$ converge to the finite values $\gamma_{\text{min}}$, $\gamma_{\text{max}}$, respectively.

For a probability measure $\mu$, the \textit{Hilbert} (or \textit{Stieltjes}) transform is the map $H_{\mu}:\bR \backslash \text{supp}(\mu) \rightarrow \bR$,
$H_{\mu}(z) = \int \frac{1}{z - t} \, d \mu(t)$.
This map is invertible, and denoting its inverse by $H_{\mu}^{-1}(.)$, for $z$ in range of $H_{\mu}$ we define the \textit{R-transform} of a probability measure $\mu$ as $R_{\mu}(z) = H_{\mu}^{-1}(z) - \frac{1}{z}$.

\begin{theorem}[Guionnet and Maida \cite{GM05}]
Suppose $\mu_{\bA}$ converges weakly towards $\mu$ and ${\rm rank}(B) =1$. Let $H_{\text{min}} = \lim_{ z \rightarrow \gamma_{\text{min}}} H_{\mu}(z)$, $H_{\text{max}} = \lim_{ z \rightarrow \gamma_{\text{max}}} H_{\mu}(z)$, and $\theta$ be the non-zero eigenvalue of $\bB$, then:
\begin{equation}
 \begin{split}
  \lim_{n \rightarrow \infty} \frac{1}{n} \ln & I_n(\bA, \bB) \\
  &= \theta \nu(\theta) - \frac{1}{2} \int \ln(1+ 2 \theta \nu(\theta) - 2 \theta t) \, d \mu(t)
 \end{split}
\end{equation}
where
\begin{equation}
 \nu ( \theta ) = \left\{
\begin{array}{ll}
  R_{\mu} ( 2 \theta ) & \text{if } H_{\text{min}} \leq 2 \theta \leq H_{\text{max}} \\
  \gamma_{\text{max}} - \frac{1}{2\theta} & \text{if } 2 \theta > H_{\text{max}} \\
  \gamma_{\text{min}} - \frac{1}{2\theta} & \text{if } 2 \theta < H_{\text{min}}
\end{array}
\right.
\end{equation}
\end{theorem}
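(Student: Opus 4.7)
The plan is to combine Lemma~1 with the Guionnet--Maida asymptotic (Theorem~2) to first compute the limiting free energy $f(\sigma,\sigma',\lambda,\lambda')=\lim_n f_n$, and then extract the asymptotic MSE by differentiation in $\lambda$ and $\lambda'$.

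Starting from Eq.~(\ref{part_spher}), I would pass to spherical coordinates $\bx=\rho\bv$ with $\bv\in S^{n-1}$, so that the angular integral is exactly $I_n(\bA_n,\bB)$ with $\bA_n=\tfrac{\sqrt{\lambda'}}{2\sqrt{n}}\bY$ and $\bB=\theta\,\bv_0\bv_0^T$ (with $\theta=\rho^2/n$ and $\bv_0$ any fixed unit vector). Since the entries of $\bZ$ are i.i.d.\ $\cN(0,1)$, the spectral measure of $\bY/\sqrt{n}$ converges to the standard semicircle on $[-2,2]$, hence the limiting measure $\mu$ of $\bA_n$ is the semicircle on $[-\sqrt{\lambda'},\sqrt{\lambda'}]$ with $R_\mu(y)=\lambda' y/4$. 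By the BBP transition for the spiked Wigner model, the top eigenvalue of $\bY/\sqrt{n}$ equals $2$ when $\lambda\sigma^4\leq 1$ and jumps to $\sqrt{\lambda}\,\sigma^2+(\sqrt{\lambda}\,\sigma^2)^{-1}$ when $\lambda\sigma^4>1$, giving
\[
\gamma_{\max}=\begin{cases}\sqrt{\lambda'}&\text{if }\lambda\sigma^4\leq 1,\\ \tfrac{1}{2}\bigl(\sqrt{\lambda\lambda'}\,\sigma^2+\sqrt{\lambda'/\lambda}/\sigma^2\bigr)&\text{if }\lambda\sigma^4>1,\end{cases}
\]
with $H_{\max}=2/\sqrt{\lambda'}$ in the first case and $H_{\max}=2/(\sqrt{\lambda\lambda'}\,\sigma^2)$ in the second (the latter obtained by evaluating the Stieltjes transform of the semicircle at the spike).

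Applying Theorem~2 yields $\tfrac{1}{n}\ln I_n(\bA_n,\bB)\to J(\theta)$, and Laplace's method on the radial integral (combined with Stirling for $S_{n-1}$) gives $f=c(\sigma')-\sup_{\theta>0} g(\theta)$, where $g(\theta)=\tfrac{1}{2}\ln\theta-\tfrac{\theta}{2\sigma'^2}-\tfrac{\lambda'\theta^2}{4}+J(\theta)$ and $c(\sigma')$ is an explicit constant. Using the identity $J'(\theta)=\nu(\theta)$ (verified directly from the definition of $\nu$ via the R-transform and the elementary identity $\int(z-t)^{-1}d\mu(t)=H_\mu(z)$), the saddle-point equation reduces to $\nu(\theta^*)=\tfrac{1}{2\sigma'^2}+\tfrac{\lambda'\theta^*}{2}-\tfrac{1}{2\theta^*}$. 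In the ``bulk'' regime $\nu(\theta)=\lambda'\theta/2$ this gives $\theta^*=\sigma'^2$; in the ``edge'' regime $\nu(\theta)=\gamma_{\max}-1/(2\theta)$ it gives $\theta^*=(2\gamma_{\max}-1/\sigma'^2)/\lambda'$. The three cases of Eq.~(\ref{MSE_general}) correspond exactly to the combinations of (no-spike vs.\ spike) with (bulk vs.\ edge) that are self-consistent with $2\theta^*\gtreqless H_{\max}$: case 3 is bulk (either no spike with $\lambda'\sigma'^4\leq 1$, or spike with $\sqrt{\lambda\lambda'}\,\sigma^2\sigma'^2\leq 1$), case 1 is no-spike edge, and case 2 is spike edge.

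To finish, I would substitute the explicit $f$ into Lemma~1, $\tfrac{1}{4}\text{MSE}=\partial_{\lambda'}f+\bigl(2-\sqrt{\lambda/\lambda'}\bigr)\sqrt{\lambda/\lambda'}\,\partial_\lambda f+\tfrac{1}{4}\sigma^4$, and compute the derivatives in each region. By the envelope theorem, $\partial_\lambda f$ and $\partial_{\lambda'} f$ reduce to $\partial_\lambda g$ and $\partial_{\lambda'} g$ evaluated at $\theta^*$, so only the explicit $\lambda,\lambda'$-dependence of the potential and of $\mu,\gamma_{\max}$ matters. The uniform convergence hypothesis on $(\text{MSE}_n)_n$ is exactly what is needed to justify interchanging $\lim_n$ with the particular combination of derivatives appearing in Lemma~1.

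The main obstacle is the case analysis and the resulting algebra: one must verify that each candidate $\theta^*$ actually lies in the predicted regime, then push through the differentiation of $f$ to recognize the somewhat unwieldy second line of Eq.~(\ref{MSE_general}). A secondary technical point is that Theorem~2 provides only pointwise convergence of $n^{-1}\ln I_n$, so one needs a uniform bound on $J$ in a neighborhood of the saddle together with concentration of the top eigenvalue of $\bY/\sqrt{n}$ around its BBP value in order to close the Laplace argument rigorously.
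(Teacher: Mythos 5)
The statement you were asked to prove is Theorem~2, the Guionnet--Maida asymptotic formula for the rank-one spherical integral $I_n(\bA,\bB)=\int D\bU\, e^{n\Tr \bA\bU\bB\bU^T}$. Your proposal does not prove this statement: it \emph{invokes} it (``Applying Theorem~2 yields $\tfrac{1}{n}\ln I_n(\bA_n,\bB)\to J(\theta)$'') as a black box and then carries out the downstream computation of the limiting free energy and the MSE. What you have written is a reasonable sketch of the proofs of Theorem~3 and Theorem~1 of the paper --- and it does match the paper's own strategy for those results (spherical coordinates, Laplace's method on the radial integrals, Lemma~1 to pass from free energy to MSE, the uniform-convergence hypothesis to exchange limit and derivative) --- but as a proof of Theorem~2 it is circular. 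For the record, the paper itself gives no proof of Theorem~2 either; it is imported verbatim from \cite{GM05}.

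A genuine proof of the quoted statement must analyze the integral over the orthogonal group directly, with no reference to the inference problem. Writing $\bB=\theta\,\bv_0\bv_0^T$ and $\bu=\bU^T\bv_0$, which is Haar-uniform on the unit sphere, one has $\Tr\bA\bU\bB\bU^T=\theta\sum_i\gamma_i u_i^2$ with $\gamma_i$ the eigenvalues of $\bA$. Representing $\bu$ as a normalized Gaussian vector reduces the problem to the exponential asymptotics of $\int e^{n\theta\sum_i\gamma_i g_i^2/\Vert \bg\Vert^2}\prod_i e^{-g_i^2/2}dg_i$, which after a tilting/saddle-point argument yields a one-dimensional variational problem whose stationarity condition involves the Stieltjes transform $H_\mu$. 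The three branches in the definition of $\nu(\theta)$ then correspond to whether the saddle lies in the range $[H_{\min},H_{\max}]$ of $H_\mu$ (giving the R-transform $R_\mu(2\theta)$) or is pinned at an edge, in which case a single coordinate of $\bu$ acquires macroscopic weight and condenses on the extreme eigenvector, producing the terms $\gamma_{\max}-\tfrac{1}{2\theta}$ and $\gamma_{\min}-\tfrac{1}{2\theta}$. None of these ingredients appears in your write-up, so the proposal leaves the actual content of Theorem~2 entirely unestablished; if your goal was Theorem~1, the outline is essentially the paper's own, but it should be labeled as such.
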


\subsection{Computing Free Energy}
To apply the result from \cite{GM05}, we can rewrite the spherical integral in eq. (\ref{part_spher}) as
\begin{equation}
 \int D \bU e^{n \Tr \frac{\bY}{\sqrt{n}} \bU \frac{\sqrt{\lambda'}}{2n}\bx \bx^T \bU^T}
 \label{spher_part}
\end{equation}

$\frac{\bY}{\sqrt{n}} = \frac{\sqrt{\lambda}}{n} \bs \bs^T + \frac{1}{\sqrt{n}} \bZ$, where $\frac{1}{\sqrt{n}} \bZ$ is the suitably normalized Wigner matrix whose limiting spectral measure is the renowned \textit{semi-circle law} with density $\mu_{\rm SC}=\frac{1}{2 \pi} \sqrt{4-t^2} dt$. At the same time, the spectral measure of $\frac{\bY}{\sqrt{n}}$ converges almost surely (a.s) as $n \rightarrow \infty$ to the \textit{semi-circle law} (see e.g. proposition 1 in \cite{CD16}). We have $H_{\mu_{SC}}(z) = \frac{1}{2}(z - \sqrt{z^2 - 4})$ and $R_{\mu_{SC}}(z) = z$.

Let $\gamma_{\text{min}}$ and $\gamma_{\text{max}}$ be the bottom and top eigenvalue of $\frac{\bY}{\sqrt{n}}$, from the results in \cite{BN11}, we have (a.s.)
\begin{equation}
  \gamma_{\text{min}} = -2, \gamma_{\text{max}} = \left\{
\begin{array}{ll}
  2 & \text{if } \frac{\sqrt{\lambda} \|\bs\|^2}{n} \leq 1\\
  \frac{\sqrt{\lambda} \|\bs\|^2}{n} + \frac{n}{\sqrt{\lambda} \|\bs\|^2}& \text{if } \frac{\sqrt{\lambda} \|\bs\|^2}{n} \geq 1
\end{array}
\right.
\end{equation}
So,
\begin{equation}
  H_{\text{min}} = -1, H_{\text{max}} = \left\{
\begin{array}{ll}
  1 & \text{if } \frac{\sqrt{\lambda} \|\bs\|^2}{n} \leq 1\\
  \frac{n}{\sqrt{\lambda} \|\bs\|^2}& \text{if } \frac{\sqrt{\lambda} \|\bs\|^2}{n} \geq 1
\end{array}
\right.
\end{equation}

On the other hand the non-zero eigenvalue of the rank-one matrix $\frac{\sqrt{\lambda'}}{2n}\bx \bx^T$ is $\theta = \frac{\sqrt{\lambda'}}{2n}\|\bx\|^2$. Thus, the asymptotic of the integral in eq. \eqref{spher_part} is only a function of $\|\bx\|$ and $\|\bs\|$, and can be computed by theorem 2 for the different cases of the parameters.

\begin{theorem}
For all $\sigma, \sigma', \lambda, \lambda'$ positive, the asymptotic of free energy for the mismatched inference is given in eq. (\ref{f_general}).
\begin{figure*}[t]
\centering
\begin{minipage}{\textwidth}
\begin{equation}
 \lim_{n \rightarrow \infty} f_n(\sigma, \sigma', \lambda, \lambda') = \left\{
\begin{array}{ll}
  -\frac{1}{4 \lambda' \sigma'^4} + \frac{1}{\sqrt{\lambda'} \sigma'^2} - \frac{3}{4} + \ln \lambda'^{\frac{1}{4}}\sigma' & \text{if } \lambda \leq \frac{1}{\sigma^4}\text{, and } \lambda' \geq \frac{1}{\sigma'^4} \\
  \frac{1}{2} \ln \sqrt{\lambda \lambda'} \sigma^2 \sigma'^2 -\frac{1}{4 \lambda' \sigma'^4} - \frac{\lambda \sigma^4}{4} + \sqrt{\frac{\lambda}{\lambda'}}\frac{\sigma^2}{2 \sigma'^2} + \frac{1}{2 \sqrt{\lambda \lambda'} \sigma^2 \sigma'^2 } - \frac{1}{2} & \text{if } \lambda \geq \frac{1}{\sigma^4}\text{, and } \sqrt{\lambda \lambda'} \geq \frac{1}{\sigma^2 \sigma'^2} \\
  0 & \text{if o.w.}
\end{array}
\right.
\label{f_general}
\end{equation}
\medskip
\vspace{-0.4cm}
\hrule
\end{minipage}
\end{figure*}
\end{theorem}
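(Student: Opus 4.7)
My plan is to start from eq.~\eqref{part_spher}, reduce the partition function to a one-dimensional Laplace integral in the radial variable $q := \|\bx\|^2/n$, apply the Guionnet--Maida theorem (Theorem~2) to the remaining angular integral, and then minimize an explicit scalar rate function in each of three regimes matching the three branches of \eqref{f_general}.

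\emph{Polar reduction.} Because $\bP=\cN(0,\sigma'^2 I_n)$ is rotation invariant, I write $\bx=r\bu$ with $r\geq 0$ and $\bu$ uniform on $S^{n-1}$. The radial Jacobian $r^{n-1}$ combined with Stirling for $2\pi^{n/2}/\Gamma(n/2)$ produces an explicit $n$-dependent prefactor whose contribution to $f_n$ is $\ln\sigma' - \tfrac12 + o(1)$. After the change of variables $q=r^2/n$, the angular integral is exactly a spherical integral $I_n(\bY/\sqrt n,\,\tfrac{\sqrt{\lambda'}q}{2}\bv\bv^T)$ (for any fixed unit $\bv$), since the Haar orbit of a unit vector is uniform on $S^{n-1}$. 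Up to the prefactor,
$$ Z(\bY)\propto \int_0^\infty \exp\!\Bigl(n\bigl[\tfrac12\ln q - \tfrac{q}{2\sigma'^2} - \tfrac{\lambda' q^2}{4}\bigr]\Bigr)\, I_n\!\Bigl(\tfrac{\bY}{\sqrt n},\tfrac{\sqrt{\lambda'}q}{2}\bv\bv^T\Bigr)\,dq. $$

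\emph{Spherical integral asymptotics and case analysis.} The matrix $\bY/\sqrt n$ has empirical spectral measure converging a.s.\ to $\mu_{\rm SC}$, and by \cite{BN11} its extremal eigenvalues satisfy $\gamma_{\min}\to -2$ and $\gamma_{\max}\to \max\{2,\,\sqrt\lambda\sigma^2 + 1/(\sqrt\lambda\sigma^2)\}$. The single nonzero eigenvalue of the rank-one matrix is $\theta(q)=\sqrt{\lambda'}q/2$. Using $R_{\mu_{\rm SC}}(z)=z$ together with the closed form $\int \ln(z-t)\,d\mu_{\rm SC}(t)=\tfrac{z^2-z\sqrt{z^2-4}}{4}+\ln\tfrac{z+\sqrt{z^2-4}}{2}-\tfrac12$ for $z\geq 2$, Theorem~2 yields $\tfrac1n \ln I_n\to J(\theta(q))$ with $J$ piecewise elementary: $J(\theta)=\theta^2$ in the linear regime $2\theta\leq H_{\max}$, and $J(\theta)=\theta\gamma_{\max}-\tfrac12-\tfrac12\ln(2\theta)-\tfrac12\int\ln(\gamma_{\max}-t)d\mu_{\rm SC}(t)$ in the saturated regime $2\theta>H_{\max}$, both rational (plus a log) in $q,\lambda,\lambda',\sigma,\sigma'$.

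\emph{Laplace minimization.} Combining these, Laplace's method gives $\lim_n f_n = \ln\sigma' - \tfrac12 + \inf_{q>0} h(q)$ with $h(q)=-\tfrac12\ln q + \tfrac{q}{2\sigma'^2} + \tfrac{\lambda' q^2}{4} - J(\theta(q))$. In the linear regime $J(\theta(q))=\lambda'q^2/4$ cancels the quartic term, so $h(q)=-\tfrac12\ln q + q/(2\sigma'^2)$ is minimized at $q^*=\sigma'^2$ with $h(q^*)=-\ln\sigma'+\tfrac12$, giving $f_\infty=0$. Self-consistency $\sqrt{\lambda'}\sigma'^2\leq H_{\max}$ reduces to $\lambda'\sigma'^4\leq 1$ when $\lambda\sigma^4\leq 1$, and to $\sqrt{\lambda\lambda'}\sigma^2\sigma'^2\leq 1$ when $\lambda\sigma^4\geq 1$, exactly the ``otherwise'' branch of \eqref{f_general}. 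In the saturated regime $h$ is a quadratic in $q$ plus an additive $\lambda$-,$\sigma$-dependent constant; solving $h'(q^*)=0$ yields $q^*=2/\sqrt{\lambda'}-1/(\lambda'\sigma'^2)$ when $\lambda\sigma^4\leq 1$, and $q^*=\sqrt{\lambda/\lambda'}\sigma^2+1/(\sqrt{\lambda\lambda'}\sigma^2)-1/(\lambda'\sigma'^2)$ when $\lambda\sigma^4\geq 1$. Substituting back and simplifying reproduces verbatim the first two branches of \eqref{f_general}, and the inequalities $q^*\geq 1/(\sqrt{\lambda'}H_{\max})$ become precisely the stated phase boundaries.

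\emph{Main obstacle.} The delicate step is promoting the pointwise-in-$q$ Guionnet--Maida asymptotic to uniform convergence on relevant compacta and justifying Laplace's method with a random integrand (the exponent depends on $\bZ$ through $\gamma_{\max}$). Large-$q$ tails are dominated by the quartic $\lambda'q^2/4$; the small-$q$ neighbourhood, where the factor $q^{(n-2)/2}$ flirts with a boundary singularity, requires a separate dominated-convergence argument exploiting that $\tfrac1n\ln I_n$ is uniformly bounded on $q\in(0,1]$ with high probability, since $\gamma_{\max}$ is bounded. Randomness concentrates by \cite{BN11} and by the LLN on $\|\bs\|^2/n\to\sigma^2$, so conditioning on a high-probability event reduces the analysis to a deterministic Laplace/Varadhan argument, after which uniform integrability (provided by Gaussian tails and the quartic confinement) promotes the a.s.\ limit to a limit of expectations, yielding \eqref{f_general}.
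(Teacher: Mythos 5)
Your proposal is correct and follows essentially the same route as the paper's proof sketch: polar reduction of the $\bx$-integral, Guionnet--Maida asymptotics for the spherical integral using the semicircle law and the edge eigenvalue of $\bY/\sqrt{n}$ from \cite{BN11}, and a Laplace minimization over $q=\|\bx\|^2/n$ (the paper treats the $\bs$-average by a second Laplace integral rather than your LLN-plus-uniform-integrability argument, but the two are equivalent), and your explicit minimizers and case boundaries reproduce \eqref{f_general} exactly. One cosmetic slip: the saturation condition $2\theta(q^*)>H_{\mathrm{max}}$ reads $q^*>H_{\mathrm{max}}/\sqrt{\lambda'}$ rather than $q^*>1/(\sqrt{\lambda'}H_{\mathrm{max}})$ (these differ when $\lambda\sigma^4>1$), though the phase boundaries you then state are the correct ones.
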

\begin{proof}[Proof sketch]
We have
\begin{equation}
 \begin{split}
    f_n &= -\frac{1}{n} \bE_{\bP^*, \bP_{\bZ}} [\ln Z(\bY)]\\
 &= -\frac{1}{n} \int d\bs \bP^*(\bs) \bE_{\bP_{\bZ}} \Big[\ln \int d\bx \bP(\bx) \, e^{\frac{-\lambda'}{4n}\|\bx\|^4 + \ln I_n} \Big]
 \end{split}
 \label{f_proof}
\end{equation}
where 
\begin{equation}
 I_n = \int D \bU e^{n \Tr \frac{\bY}{\sqrt{n}} \bU \frac{\sqrt{\lambda'}}{2n}\bx \bx^T \bU^T}
\end{equation}
It is not difficult to see that $I_n$ is invariant under the transformation $\bx \to \bm{R} \bx$ where $\bm{R}$ is a rotation matrix. Therefore the integrand in the $\bx$-integral in \eqref{f_proof} is a function of $\Vert \bx\Vert$. Furthermore recalling $\bY = \sqrt{\frac{\lambda}{n}} \bs \bs^T + \bZ$ and using rotation invariance of $\bP_{\bZ}$ we see that the integrand of the $\bs$-integral is a function of $\Vert \bs\Vert$. Therefore we can use spherical coordinates to reduce the integrals in \eqref{f_proof} to two one-dimensional integrals which yields
\begin{equation}
 \begin{split}
   & f_n = - \frac{2^{-\frac{n}{2}+1}}{\Gamma(\frac{n}{2})}\frac{1}{\sigma^n} \int_0^{+\infty} dr \, r^{n-1} e^{-\frac{r^2}{2 \sigma^2}} \times \\
   & \mathbb{E}_{\bP_{\bZ}}\bigg[\frac{1}{n} \ln \Big\{ \frac{2^{-\frac{n}{2}+1}}{\Gamma(\frac{n}{2})}\frac{1}{\sigma'^n}\, \int_0^{+\infty} d \rho \, \rho^{n-1} e^{-\frac{\rho^2}{2\sigma'^2} - \frac{-\lambda'}{4n}\rho^4 + \ln I_n} \Big\}\bigg]
 \end{split}
\end{equation}
where $r := \|\bs\|$, $\rho := \| \bx\|$, and $\Gamma(.)$ is the \textit{Gamma} function.

Changing variable $\frac{r^2}{n} \rightarrow r$, $\frac{\rho^2}{n} \rightarrow \rho$, we obtain
\begin{equation}
 f_n = - \frac{2^{-\frac{n}{2}}n^{\frac{n}{2}}}{\Gamma(\frac{n}{2})}\frac{1}{\sigma^n} \int_0^{+\infty} \frac{d r}{r} \, e^{-n (\frac{r}{2 \sigma^2} - \frac{1}{2} \ln r)} \mathbb{E}_{\bP_{\bZ}} [g_n(r)]
 \label{f_proof_2}
\end{equation}
where $g_n(r)$ is
\begin{equation}
\frac{1}{n} \ln \Big\{ \frac{2^{-\frac{n}{2}}n^{\frac{n}{2}}}{\Gamma(\frac{n}{2})}\frac{1}{\sigma'^n}\, \int_0^{+\infty} \frac{d \rho}{\rho} \, e^{-n( \frac{\lambda'}{4}\rho^2 + \frac{\rho}{2 \sigma'^2} + \frac{1}{2} \ln \rho - \frac{1}{n} \ln I_n)} \Big\}
\end{equation}
Carefully applying the Laplace method (using Theorem 2.1 in \cite{O74}) and considering different cases for the asymptotics of $\frac{1}{n} \ln I_n$ yields the asymptotics of $g_n(r)$. The result is independent of $\bZ$, and finally, the asymptotics of $f_n$ in (\ref{f_proof_2}) can be computed using the Laplace method again.
\end{proof}

Once we have the expression for the free energy, we can compute the MSE using Lemma 1. As explained in remark \ref{remark-on-unif-conv} this step uses the assumption that for $(\lambda, \lambda')\in K \subset \mathbb{R}_{+}^2$ the sequence $({\rm MSE})_{n\geq 1}$ converges uniformly. 

\section{Conclusion}
Studying inference problems in settings where priors and hyper-parameters are unknown or partially known and deriving fundamental limits of estimation is a problem with practical importance. We derived analytical formulas for asymptotic MSE in estimating a rank-one matrix corrupted by additive Gaussian noise when both the channel and prior are partially known. In this short note, we have shown how to treat one of the most straightforward such situations by using beautiful asymptotic formulas of spherical integrals. The major limitation of our technique is that the statistician assumes a spherically invariant prior. This can be a Gaussian which has the advantage of being factorized, but we can also treat a uniform distribution over a sphere. Given such distributions for the statistician, it is then possible to extend our analysis to a broader class of problems, namely:
\begin{itemize}
\item
Estimation of finite rank matrices can be accomodated (i.e., ${\rm rank} = O(1)$ w.r.t $n\to +\infty$).
\item
The true prior does not need to be rotation invariant. General factorized priors can be accommodated, for example, a Rademacher-Bernoulli mixture modeling sparse signals. 
\item
A temperature parameter can be introduced by the statistician in his mismatched posterior distribution (with minor modifications on the analysis).
\end{itemize}
These extensions result in a very rich phenomenology with many possible phase transitions. Already in the simplest situation considered here, the MSE displays non-trivial features. 
Other problems of interest are the construction of more general estimators (non-Bayesian or non-Gibbsian) which can still be analyzed through spherical integrals, as well as confronting the analytical expressions of the MSE to algorithmic predictions, for example, those based on AMP \cite{RF12}, or Approximate Survey Propagation \cite{LFsurvey} applied to mismatched situations.

\section*{Acknowledgment}
\vspace{-0.1cm}
The work of F. P has been supported by the Swiss National Science Foundation grant no 200021E 175541.
N.M is thankful to Jean Barbier and Emanuele Mingione for numerous discussions.


\end{document}